\documentclass[sn-basic]{sn-jnl}


\usepackage{graphicx}%
\usepackage{multirow}%
\usepackage{amsmath,amssymb,amsfonts}%
\usepackage{amsthm}%
\usepackage{mathrsfs}%
\usepackage[title]{appendix}%
\usepackage{xcolor}%
\usepackage{textcomp}%
\usepackage{manyfoot}%
\usepackage{booktabs}%
\usepackage{algorithm}%
\usepackage{algorithmicx}%
\usepackage{algpseudocode}%
\usepackage{listings}%

\usepackage{pgfplots}
\pgfplotsset{width=10cm,compat=1.15}




\theoremstyle{thmstyleone}%
\newtheorem{theorem}{Theorem}
%

\newtheorem{lemma}[theorem]{Lemma}

\theoremstyle{thmstyletwo}%

\theoremstyle{thmstylethree}%

\raggedbottom

\begin{document}
	
	\title[Improved algorithms for single machine serial-batch scheduling]{Improved algorithms for single machine serial-batch scheduling to minimize makespan and maximum cost}
	
	
	\author*[1]{\fnm{Shuguang} \sur{Li}}\email{sgliytu@hotmail.com}
	
	\author[1]{\fnm{Zhenxin} \sur{Wen}}\email{2023420176@sdtbu.edu.cn}
	
	\author[1]{\fnm{Jing} \sur{Wei}}\email{2023420185@sdtbu.edu.cn}
	
	\affil[1]{\orgdiv{School of Computer Science and Technology}, \orgname{Shandong Technology and Business University}, \orgaddress{\city{Yantai}, \postcode{264005}, \country{China}}}

	
	\abstract{This paper studies the bicriteria problem of scheduling $n$ jobs on a serial-batch machine to minimize makespan and maximum cost simultaneously. A serial-batch machine can process up to $b$ jobs as a batch, where $b$ is known as the batch capacity. When a new batch starts, a constant setup time is required for the machine. Within each batch, the jobs are processed sequentially, and thus the processing time of a batch equals the sum of the processing times of its jobs. All the jobs in a batch have the same completion time, namely, the completion time of the batch. The main result is an $O(n^3)$-time algorithm which can generate all Pareto optimal points for the bounded model ($b<n$) without precedence relation. The algorithm can be modified to solve the unbounded model ($b\ge n$) with strict precedence relation in $O(n^3)$ time as well. The results improve the previously best known running time of $O(n^4)$ for both the bounded and unbounded models.}

	\keywords{Scheduling, Pareto optimization, serial-batch, makespan, maximum cost}
	
	
	\pacs[MSC Classification]{90B35, 68Q25}
	
	\maketitle
	
	\section{Introduction}\label{sec1}
	
	Let a set of $n$ jobs $\mathcal{J}=\{{{J}_{1}},{{J}_{2}},\ldots ,{{J}_{n}}\}$ and a serial-batch machine be given.  Each job ${{J}_{j}}$ becomes available for processing at time zero, and has a processing time ${{p}_{j}}$ and a due date ${{d}_{j}}$ which indicates a preferred completion time. The jobs are processed in batches on the serial-batch machine. When a new batch starts, a constant setup time $s$ is required for the machine. Jobs within each batch are processed sequentially so that the processing time of a batch equals the sum of the processing times of all jobs in the batch. All the jobs in a batch have the same completion time, namely, the completion time of the batch.
	
	The serial-batch machine can process up to $b$ jobs as a batch, where $b$ is known as the batch capacity. There are two models in the literature with respect to the batch capacity $b$: the bounded model and the unbounded model, for which $b<n$ and $b\ge n$, respectively.
	
	A schedule specifies a partition of the jobs into batches and the start times of the batches. The order of jobs within a batch is immaterial due to the definition of the job completion times. For a given schedule $\sigma $, let ${{C}_{j}}(\sigma )$ and ${{L}_{j}}(\sigma )={{C}_{j}}(\sigma )-{{d}_{j}}$ denote the completion time and lateness of job ${{J}_{j}}$ in $\sigma $, respectively. Let  ${{C}_{\max }}(\sigma )={{\max }_{j}}{{C}_{j}}(\sigma )$ and ${{L}_{\max }}(\sigma )={{\max }_{j}}{{L}_{j}}(\sigma )$ denote the maximum completion time (makespan) and maximum lateness of all jobs in $\sigma $, respectively. We omit the argument $\sigma $ when it is clear to which schedule we are referring.
	
	The maximum lateness can be generalized to maximum cost ${{f}_{\max }}(\sigma )={{\max }_{j}}{{f}_{j}}({{C}_{j}}(\sigma ))$, where ${{f}_{j}}({{C}_{j}}(\sigma ))$ is the cost associated with job ${{J}_{j}}$. It is assumed that all the cost functions are regular (non-decreasing in the completion times of the jobs), and the values of ${{f}_{j}}(t)$ for every given time $t\ge 0$ can be calculated in constant time.
	
	In addition, there may be a precedence relation imposed on the jobs. Two types of precedence relation are involved: (1) strict precedence relation ($\prec $): if ${{J}_{i}}\prec {{J}_{j}}$, then ${{C}_{i}}(\sigma )\le {{S}_{j}}(\sigma )$ in a feasible schedule $\sigma $; (2) weak precedence relation ($\preceq$): if ${{J}_{i}}\preceq {{J}_{j}}$, then ${{S}_{i}}(\sigma )\le {{S}_{j}}(\sigma )$ in a feasible schedule $\sigma $.

	In this paper, we study the problems of scheduling jobs, without or with precedence relations, on a serial-batch machine so as to simultaneously optimize makespan ${{C}_{\max }}$ and maximum cost ${{f}_{\max }}$. The problems fall under the category of Pareto optimization scheduling \citep{Hoogeveen05,TKindt06}. A feasible schedule $\sigma $ is Pareto optimal with respect to ${{C}_{\max }}$ and ${{f}_{\max }}$ if there is no feasible schedule ${\sigma }'$ such that ${{C}_{\max }}({\sigma }')\le {{C}_{\max }}(\sigma )$ and ${{f}_{\max }}({\sigma }')\le {{f}_{\max }}(\sigma )$, where at least one of the inequalities is strict. The objective vector  $({{C}_{\max }}(\sigma ),{{f}_{\max }}(\sigma ))$ is called a Pareto optimal point. Following the notation scheme of \cite{TKindt06}, the problems under consideration can be denoted as $1\vert s-batch,b<n \vert ({{C}_{\max }},{{f}_{\max }})$, $1\vert \preceq ,s-batch,b\ge n\vert ({{C}_{\max }},{{f}_{\max }})$ and $1\vert \prec ,s-batch,b\ge n\vert ({{C}_{\max }},{{f}_{\max }})$, where ``$s-batch$" stands for the serial-batch scheduling, ``$b<n$" (``$b\ge n$") means that the batch capacity is bounded (unbounded), ``$\preceq $" (``$\prec $") stands for the weak (strict) precedence relation, and the objective ``$({{C}_{\max }},{{f}_{\max }})$" represents the simultaneous optimization for criteria ${{C}_{\max }}$ and ${{f}_{\max }}$. Note that if jobs have different release dates, then both $1\vert {{r}_{j}}\vert {{L}_{\max }}$($b=1$, minimizing maximum lateness) and $1\vert s-batch,b\ge n,{{r}_{j}}\vert {{L}_{\max }}$ (without precedence relation, minimizing maximum lateness) are NP-hard problems \citep{Brucker07,Lenstra77}.

	The main contribution of this paper is an $O({{n}^{3}})$-time algorithm for $1\vert s-batch,b<n\vert ({{C}_{\max }},{{f}_{\max }})$ which can generate all Pareto optimal points and find a corresponding schedule for each Pareto optimal point. The algorithm can be modified to solve $1\vert \prec ,s-batch,b\ge n\vert ({{C}_{\max }},{{f}_{\max }})$ in  $O({{n}^{3}})$ time as well. The two problems have the previously best known running time of $O({{n}^{4}})$ \citep{Geng18}.
	
	The paper is organized as follows. In Section 2, the previous related studies are reviewed. In Section 3, an $O({{n}^{3}})$-time algorithm for $1\vert s-batch,b<n\vert ({{C}_{\max }},{{f}_{\max }})$ is presented. In Section 4, an $O({{n}^{3}})$-time algorithm for  $1\vert \prec ,s-batch,b\ge n\vert ({{C}_{\max }},{{f}_{\max }})$ is presented. Finally, some concluding remarks are provided in Section 5.
	
	\section{Related work} \label{sec2}
	
	In the last two decades, serial-batch scheduling problems have been widely investigated \citep{Geng18,Baptiste00,Cheng01,Yuan04,Ng02,Webster95,He08,He13,He15}. Along with other results, \cite{Baptiste00} presented an $O({{n}^{14}}\log n)$-time algorithm for $1\vert s-batch,b\ge n,{{r}_{j}},{{p}_{j}}=p\vert {{L}_{\max }}$ and $1\vert s-batch,b<n,{{r}_{j}},{{p}_{j}}=p\vert {{L}_{\max }}$ (jobs have different release dates and equal processing times). \cite{Cheng01} studied serial batch scheduling problems for minimizing some regular cost functions. When the jobs have equal release dates, they presented dynamic programming algorithms for minimizing the maximum lateness, the number of late jobs, the total tardiness, the total weighted completion time, and the total weighted tardiness when all due dates are equal, which are polynomial if there is a fixed number of distinct due dates or processing times. They also derived more efficient algorithms for some special cases, and proved the NP-hardness of several special cases of the bounded model.
	
	By appropriately modifying the release dates and due dates, problems $1\vert \preceq, s-batch, b\ge n,{{r}_{j}},{{p}_{j}}=p\vert {{L}_{\max }}$  (weak precedence relation) and $1\vert \preceq ,s-batch,b\ge n\vert {{L}_{\max }}$( jobs have different processing times and equal release dates) can be reduced in $O({{n}^{2}})$ time to  $1\vert s-batch,b\ge n,{{r}_{j}},{{p}_{j}}=p\vert {{L}_{\max }}$ and $1\vert s-batch,b\ge n\vert {{L}_{\max }}$  \citep{Yuan04,Ng02}, and the latter problems can be solved in $O({{n}^{14}}\log n)$ and $O({{n}^{2}})$ time, respectively \citep{Baptiste00,Webster95}.
	
	\cite{He08} presented an $O({{n}^{2}})$-time algorithm for $1\vert s-batch,b\ge n\vert ({{C}_{\max }},{{L}_{\max }})$. Later, the result has been extended to an $O({{n}^{5}})$-time algorithm for $1\vert s-batch,b\ge n\vert ({{C}_{\max }},{{f}_{\max }})$ \citep{He13}. \cite{He15} obtained an $O({{n}^{6}})$-time algorithm for $1\vert s-batch,b<n\vert ({{C}_{\max }},{{L}_{\max }})$. \cite{Geng18} presented $O({{n}^{4}})$-time algorithms for the problems studied in this paper, i.e., $1\vert s-batch,b<n\vert ({{C}_{\max }},{{f}_{\max }})$ and $1\vert \prec ,s-batch,b\ge n\vert ({{C}_{\max }},{{f}_{\max }})$. They also gave an $O({{n}^{2}})$-time algorithm for $1\vert \preceq,s-batch,b\ge n\vert ({{C}_{\max }},{{L}_{\max }})$, and proved that problems $1\vert \preceq,s-batch,b=2\vert {{L}_{\max }}$ and $1\vert \prec ,s-batch,b=2\vert {{L}_{\max }}$ are strongly NP-hard.

	\section{The bounded case}\label{sec3}
	
	In this section we will present an $O({{n}^{3}})$-time algorithm for $1\vert s-batch,b<n\vert ({{C}_{\max }},{{f}_{\max }})$. As a by-product, the last schedule constructed by the algorithm is optimal for $1\vert s-batch,b<n\vert {{f}_{\max }}$.
	
	For a batch ${{B}_{i}}$, let $p({{B}_{i}})$ denote the processing time of ${{B}_{i}}$, i.e., $p({{B}_{i}})=\sum\nolimits_{{{J}_{j}}\in {{B}_{i}}}{{{p}_{j}}}$. Let $S({{B}_{i}})$ and $C({{B}_{i}})$ denote the start time and completion time of batch ${{B}_{i}}$, respectively. We have: $C({{B}_{i}})=S({{B}_{i}})+p({{B}_{i}})$. To coincide with the definition of CSF (which will be defined later), we allow empty batches to appear in a schedule. The setup time and processing time of an empty batch are zero. Let $s({{B}_{i}})$ denote the setup time of batch ${{B}_{i}}$, which is $s$ if ${{B}_{i}}$ is nonempty, and 0 otherwise.
	
	For a feasible schedule, we index its batches left-to-right (earliest-to-latest) as ${{B}_{1}},{{B}_{2}},\ldots ,{{B}_{n}}$, where the last $l$ batches ${{B}_{n-l+1}},{{B}_{n-l+2}},\ldots ,{{B}_{n}}$ are nonempty and form a partition of $\mathcal{J}$, and the first $n-l$ batches ${{B}_{1}},{{B}_{2}},\ldots ,{{B}_{n-l}}$ are empty. Since both criteria ${{C}_{\max }}$ and ${{f}_{\max }}$ are regular, we  can consider only the schedules without idle times. Therefore, we have:

	\begin{lemma} \label{start}
		In a feasible schedule $\sigma =\left( {{B}_{1}},{{B}_{2}},\ldots ,{{B}_{n}} \right)$, $S({{B}_{1}})=s({{B}_{1}})$, $S({{B}_{i}})=C({{B}_{i-1}})+s({{B}_{i}})$, $i=2,\ldots ,n$.
		
	\end{lemma}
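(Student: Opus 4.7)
The plan is to prove Lemma~\ref{start} as a direct consequence of the regularity of the two objective criteria together with the bookkeeping convention for setup times introduced just before the statement. Both $C_{\max}$ and $f_{\max}$ are non-decreasing in every completion time $C_j$, so shifting any batch leftward can only weakly improve both objectives; consequently, for proving that a formula describes \emph{some} feasible schedule attaining any Pareto optimal pair, it suffices to consider schedules in which no machine idleness occurs between consecutive batches (or before the first batch). I would state this reduction explicitly as the first step, citing the regularity assumption made in Section~1.

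Next I would handle the base case $i=1$. If $B_1$ is empty, then by our convention $s(B_1)=0$ and $p(B_1)=0$, and we may take $S(B_1)=0=s(B_1)$ without loss of generality; any positive value would correspond to useless idle time that we already ruled out. If $B_1$ is nonempty, then the machine must first complete a setup of length $s$ before the first job of $B_1$ can begin processing, and again by the no-idleness reduction this setup starts at time $0$; hence $S(B_1)=s=s(B_1)$. In both cases $S(B_1)=s(B_1)$, establishing the first equality.

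For the inductive step $i\ge 2$, I would argue identically at the boundary between $B_{i-1}$ and $B_i$. The machine becomes free at time $C(B_{i-1})=S(B_{i-1})+p(B_{i-1})$. If $B_i$ is empty, the no-idleness reduction allows us to set $S(B_i)=C(B_{i-1})=C(B_{i-1})+s(B_i)$ since $s(B_i)=0$. If $B_i$ is nonempty, a fresh setup of length $s$ is required before the first job of $B_i$ is processed; pushing that setup as early as possible places its start at $C(B_{i-1})$, so the processing starts at $S(B_i)=C(B_{i-1})+s=C(B_{i-1})+s(B_i)$. Either way the claimed recurrence holds.

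There is no real technical obstacle here; the only point worth emphasizing for clarity is the mild abuse by which empty batches are admitted with zero setup, so that the single formula $S(B_i)=C(B_{i-1})+s(B_i)$ covers both the empty and nonempty cases uniformly. This uniformity is what makes the lemma convenient for the CSF-based analysis promised later in Section~3.
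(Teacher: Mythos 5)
Your proposal is correct and matches the paper's reasoning: the paper gives no separate proof of Lemma~\ref{start}, deriving it immediately from the remark that regularity of $C_{\max}$ and $f_{\max}$ lets one restrict attention to schedules without idle time, combined with the convention that empty batches have zero setup and processing time. Your write-up simply makes the base case and the inductive step explicit, which is a faithful elaboration of the same argument.
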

	
	The following definition generalizes the idea used in \citep{Lazarev17} for $1\vert {{r}_{j}},{{p}_{j}}=p\vert ({{C}_{\max }},{{L}_{\max }})$ (the classical scheduling problem where the machine can process at most one job at a time), which plays a key role in our algorithm.
	
	A family of (possibly empty) sets $F=\{{{\mathcal{J}}_{1}},{{\mathcal{J}}_{2}},\ldots ,{{\mathcal{J}}_{n}}\}$ is called a Candidate Set Family (CSF for short) if these sets are disjoint, their union is $\mathcal{J}$, and for $i=1,2,\ldots ,n$, ${{\mathcal{J}}_{i}}$ consists of the jobs which cannot be assigned to any batch whose index is larger than $i$.
	
	The definition captures the idea of maintaining implicitly a position indicator for each job in the algorithm. The initial CSF is $\{\varnothing ,\varnothing ,\ldots ,\mathcal{J}\}$, meaning that the jobs can be assigned to any batches. Let us fix a particular iteration in the algorithm, a batch and a job in this batch. We check an inequality to find whether this job is suitable for the batch. If the inequality does not hold, then we have to move this job from the batch into another one to its left and update the CSF accordingly. We will prove that during the iterations the completion time of any batch will not decrease, which implies that any job can only be moved to the left (a crucial observation in the time complexity analysis). Therefore, when a job is accommodated into component ${{\mathcal{J}}_{i}}$ of the current CSF, we know that this job cannot be assigned to any batch whose index is larger than $i$.
	
	Obviously, we require that $\sum\nolimits_{h=1}^{i}{\vert {{\mathcal{J}}_{h}} \vert }\le i\cdot b$, $i=1,2,\ldots ,n$. We say that a feasible schedule $\left( {{B}_{1}},{{B}_{2}},\ldots ,{{B}_{n}} \right)$ satisfies a given CSF $F=\{{{\mathcal{J}}_{1}},{{\mathcal{J}}_{2}},\ldots ,{{\mathcal{J}}_{n}}\}$ if (1) for $i=n,n-1,\ldots ,1$, all the jobs in batch ${{B}_{i}}$ come from $\bigcup _{h=i}^{n}{{\mathcal{J}}_{h}}$; and (2) batches ${{B}_{k}},{{B}_{k+1}},\ldots ,{{B}_{n}}$ are all nonempty, where ${{B}_{k}}$ is the first (the earliest) nonempty batch in $\left( {{B}_{1}},{{B}_{2}},\ldots ,{{B}_{n}} \right)$. Equivalently speaking, each job in ${{\mathcal{J}}_{i}}$ must be processed in a batch which is earlier than at least $n-i$ nonempty batches in the schedule, $i=n,n-1,\ldots ,1$.
	
	Let $\Pi \left( \mathcal{J} \right)$ denote the set of all feasible schedules for $\mathcal{J}$. Let $\Pi \left( \mathcal{J},F,y \right)\subseteq \Pi \left( \mathcal{J} \right)$ denote the set of the schedules which satisfy CSF $F$ and have maximum cost less than $y$. Clearly, we have $\Pi \left( \mathcal{J},{{F}^{0}},+\infty  \right)=\Pi \left( \mathcal{J} \right)$, where ${{F}^{0}}=\{\varnothing ,\varnothing ,\ldots ,\mathcal{J}\}$.

	We first give an $O({{n}^{3}}\log n)$-time algorithm for the following auxiliary problem, and then improve its time complexity to $O({{n}^{3}})$. The improved algorithm will be used as a central subroutine in the main procedure for solving $1\vert s-batch,b<n\vert ({{C}_{\max }},{{f}_{\max }})$.

	{\bf Auxiliary Problem:} Find a schedule $\sigma $ in $\Pi \left( \mathcal{J},F,y \right)$ with minimum makespan.
	
	\begin{algorithm}
		\caption{(Algorithm Aux1)}\label{<Aux1>}
		\begin{algorithmic}[0]
			\item {\bf Step 1.} Initially, set $m=0$. Let ${{F}^{m}}=\{\mathcal{J}_{1}^{m},\mathcal{J}_{2}^{m},\ldots ,\mathcal{J}_{n}^{m}\}$, where $\mathcal{J}_{i}^{m}={{\mathcal{J}}_{i}}$ and the jobs in $\mathcal{J}_{i}^{m}$ are stored in a Max-heap ordered by processing times, $i=1,2,\ldots ,n$.
			\item {\bf Step 2.} Form the batches of schedule ${{\sigma }^{m}}$: For $i=n,n-1,\ldots ,1$ (this ordering is used crucially), let batch $B_{_{i}}^{m}$ consist of the $\min \{b,\vert \bigcup _{h=i}^{n}\mathcal{J}_{h}^{m}\backslash \bigcup _{g=i+1}^{n}B_{g}^{m} \vert\}$ jobs with largest processing times in $\bigcup _{h=i}^{n}\mathcal{J}_{h}^{m}\backslash \bigcup _{g=i+1}^{n}B_{g}^{m}$. If $B_{_{i}}^{m}=\varnothing $ but there is $\mathcal{J}_{e}^{m}\ne \varnothing $ with $e<i$, then return $\varnothing $.
			\item {\bf Step 3.} Schedule the batches in ${{\sigma }^{m}}$: Let $S(B_{_{1}}^{m})=s(B_{_{1}}^{m})$, $S(B_{_{i}}^{m})=C(B_{_{i-1}}^{m})+s(B_{_{i}}^{m})$, $i=2,\ldots ,n$.
			\item {\bf Step 4.}  Adjust ${{F}^{m}}$: For $i=n,n-1,\ldots ,1$, check  the inequality ${{f}_{j}}(C(B_{_{i}}^{m}))<y$ for each job ${{J}_{j}}$ in $B_{_{i}}^{m}$. If the inequality does not hold, then find the largest index $k$ such that ${{f}_{j}}(C(B_{_{k}}^{m}))<y$. Delete ${{J}_{j}}$ from its original set in ${{F}^{m}}$ and insert it in $\mathcal{J}_{k}^{m}$. If $k$ does not exist, then return $\varnothing $.
			\item {\bf Step 5.} If no adjustment has been done for ${{F}^{m}}$ after Step 4 (i.e., for all pairs $i,j$ the inequality is correct), then return ${{\sigma }^{m}}$. Otherwise, check the condition $\sum\nolimits_{h=1}^{i}{\vert \mathcal{J}_{_{h}}^{m} \vert}\le i\cdot b$ for $i=1,2,\ldots ,n$. If for any $i$ the condition does not hold, then return $\varnothing $. Otherwise, let ${{F}^{m+1}}={{F}^{m}}$ (${{F}^{m}}$ has been adjusted already) and then set $m=m+1$. Go to the next iteration (Step 2).
		\end{algorithmic}
	\end{algorithm}
	
	In Step 1 of Algorithm AUX1, we use a Max-heap to store the jobs in $\mathcal{J}_{i}^{m}$, $i=1,2,\ldots ,n$. These jobs are ordered by their processing times.  A heap data structure is an array object which can be viewed as a nearly complete binary tree \citep{Cormen09}. The tree is completely filled on all levels except possibly the lowest, which is filled from the left up to a point. In a Max-heap, the value of any tree node is at most the value of its parent. Thus, the largest element is stored at the root. The Max-heap (with $\tau $ elements) data structure is equipped with several basic procedures: (1) MAX-HEAPIFY, running in $O(\log \tau )$ time, is the key to maintaining the Max-heap property; (2) BUILD-MAX-HEAP, running in $O(\tau )$ time, produces a Max-heap from an unordered input array; (3) MAX-HEAP-INSERT, running in $O(\log \tau )$ time, inserts an element into the Max-heap; (4) MAX-HEAP-EXTRACT, running in $O(\log \tau )$ time, gets the maximum element and delete it from the Max-heap.
	
	Step 1 of AUX1 can be implemented in $O(n)$ time (BUILD-MAX-HEAP). Step 2 can be implemented in $O(n\log n)$ time in each iteration (MAX-HEAPIFY, MAX-HEAP-INSERT and MAX-HEAP-EXTRACT). Steps 3 and 5 require $O(n)$ time in each iteration.
	
	In Step 4, each job deletion (can be accomplished via MAX-HEAPIFY) or insertion (MAX-HEAP-INSERT) requires $O(\log n)$ time. Since there are $n$ jobs and each job goes through at most $n-1$ sets (from ${{\mathcal{J}}_{n}}$ to ${{\mathcal{J}}_{1}}$), the total number of job deletion and insertion is $O(\sum\nolimits_{i}{\vert {{\mathcal{J}}_{i}} \vert})=O({{n}^{2}})$, and the total number of iterations is $O({{n}^{2}})$. Checking the inequalities for all jobs in one iteration requires $O(n)$ time. Hence, the complexity contribution of Step 4 for all iterations is $O({{n}^{3}})$.
	
	The overall running time of Algorithm AUX1 is $O({{n}^{3}}\log n)$, which is determined by Step 2 for all iterations.
	
	\begin{lemma} \label{early}
		Let ${{\sigma }^{m}}=\left( B_{1}^{m},B_{2}^{m},\ldots ,B_{n}^{m} \right)$ be the schedule obtained at iteration $m$ ($m=0,1,\ldots $) of Algorithm AUX1 subject to ${{F}^{m}}$, where the last ${{l}_{m}}$ batches $B_{n-{{l}_{m}}+1}^{m},B_{n-{{l}_{m}}+2}^{m},\ldots ,B_{n}^{m}$ are nonempty. Let $\sigma =\left( {{B}_{1}},{{B}_{2}},\ldots ,{{B}_{n}} \right)$ be any feasible schedule subject to ${{F}^{m}}$, where the last $l$ batches ${{B}_{n-l+1}},{{B}_{n-l+2}},\ldots ,{{B}_{n}}$ are nonempty. Then the following properties hold:
		
		(1) ${{l}_{m}}\le l$, i.e., ${{\sigma }^{m}}$ has the minimum number of nonempty batches among all the feasible schedules subject to ${{F}^{m}}$;
		
		(2) $S(B_{i}^{m})\le S({{B}_{i}})$, $i=1,2,\ldots ,n$;
		
		(3) $C(B_{i}^{m})\le C({{B}_{i}})$, $i=1,2,\ldots ,n$;
		
		(4) ${{C}_{\max }}({{\sigma }^{m}})\le {{C}_{\max }}(\sigma )$, i.e., ${{\sigma }^{m}}$ has the minimum makespan among all the feasible schedules subject to ${{F}^{m}}$.
		
	\end{lemma}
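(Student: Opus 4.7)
The plan is to prove the four properties in cascade: first (1), then use it together with a suffix-mass inequality to derive (2)--(4).

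For (1), the key observation is that any feasible schedule $\sigma$ subject to $F^m$ with $l$ nonempty batches must have $\mathcal{J}_h^m=\varnothing$ for every $h\le n-l$, since a job in $\mathcal{J}_h^m$ can only go into a batch of index at most $h$ and the nonempty batches of $\sigma$ are $B_{n-l+1},\ldots,B_n$. Step~2's failure condition enforces the same prefix-emptiness on $\sigma^m$ when the algorithm succeeds. Moreover, AUX1's backward greedy fills each batch maximally (up to $b$) starting from the latest position, so $l_m$ is the smallest value consistent with the CSF: were some $l<l_m$ admissible, AUX1 would have encountered an empty pool earlier than it actually did. This yields $l_m \le l$.

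From (1) and the convention that empty batches form a prefix, we obtain $s(B_i^m) \le s(B_i)$ for every $i$. Using Lemma~\ref{start}, $C(B_i) = \sum_{h \le i}(s(B_h) + p(B_h))$ and similarly for $\sigma^m$. Writing $l_m(i)$ and $l(i)$ for the numbers of nonempty batches among the first $i$ batches of $\sigma^m$ and $\sigma$ respectively, the prefix structure and (1) give $l_m(i) \le l(i)$. Since $\sum_h p(B_h^m) = \sum_h p(B_h)$, the inequality $C(B_i^m) \le C(B_i)$ reduces to the suffix mass inequality
\[
\sum_{h > i} p(B_h^m) \; \ge \; \sum_{h > i} p(B_h).
\]
Property (2) then follows from (3) at index $i-1$ via $S(B_i^m) = C(B_{i-1}^m) + s(B_i^m) \le C(B_{i-1}) + s(B_i) = S(B_i)$, and (4) is the case $i=n$ of (3).

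The main obstacle is the suffix mass inequality. I would prove it by identifying $\bigcup_{h > i} B_h^m$ as a maximum-weight independent set in the laminar matroid on $\bigcup_{h > i} \mathcal{J}_h^m$ whose independent sets are those $X$ satisfying $|X \cap (\mathcal{J}_{i+1}^m \cup \cdots \cup \mathcal{J}_k^m)| \le (k-i)b$ for every $k \in \{i+1,\ldots,n\}$ --- exactly the Hall condition for distributing $X$ among batches $B_{i+1},\ldots,B_n$ while respecting both capacity $b$ and the CSF. AUX1's backward, batch-by-batch greedy (taking the $b$ largest jobs from the remaining pool at each step) coincides with the weight-ordered greedy on this laminar matroid, and so it attains the maximum. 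A direct exchange argument also works: assuming some feasible $\sigma$ has strictly larger suffix mass, take the heaviest job $J^{\ast}$ in $(\bigcup_{h>i}B_h)\setminus(\bigcup_{h>i}B_h^m)$ and let $h^{\ast}$ be the unique index with $J^{\ast} \in \mathcal{J}_{h^{\ast}}^m$; one shows that each of the $(h^{\ast}-i)b$ slots in $B_{i+1}^m,\ldots,B_{h^{\ast}}^m$ was filled by AUX1 with a job of processing time at least $p_{J^{\ast}}$, and a counting argument against the constraint $|J_{>i} \cap (\mathcal{J}_{i+1}^m \cup \cdots \cup \mathcal{J}_{h^{\ast}}^m)| \le (h^{\ast}-i)b$ for $\sigma$ yields a contradiction. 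The delicate point throughout is honoring the asymmetric rule that a job in $\mathcal{J}_h^m$ cannot migrate to any batch of index exceeding $h$, which is precisely what endows the constraint system with its laminar structure.
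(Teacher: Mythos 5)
Your route is genuinely different from the paper's. The paper proves all four properties at once by transforming an arbitrary feasible $\sigma$ into $\sigma^m$, working from $B_n$ down to $B_1$ and repeatedly swapping a longer job sitting in an earlier batch with a shorter job in a later batch; since every swap moves processing mass rightward, no batch ever starts or completes later, and (1)--(4) fall out together. Your reduction of (2)--(4) to the suffix-mass inequality $\sum_{h>i}p(B_h^m)\ge\sum_{h>i}p(B_h)$ (together with the setup-count comparison that follows from (1) and the empty-prefix structure) is sound, and your argument for (1) is the right idea, though the claim that a smaller $l$ would force AUX1's pool to empty earlier deserves the explicit Hall-type count $\sum_{g>k}\vert B_g^m\vert\ge\min\{(n-k)b,\vert\bigcup_{h>k}\mathcal{J}_h^m\vert\}$.

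The genuine gap is in the suffix-mass inequality itself: neither justification you offer is complete. For the matroid route, the assertion that AUX1's backward batch-by-batch greedy ``coincides with the weight-ordered greedy'' on the chain matroid is exactly the nontrivial content --- the two procedures examine elements in different orders (AUX1 by position, the matroid greedy by weight), and their equivalence in total weight is what needs proving, not what can be invoked. For the direct exchange, the counting step fails as stated: the $(h^{\ast}-i)b$ jobs that AUX1 places in $B_{i+1}^m,\ldots,B_{h^{\ast}}^m$ are drawn from pools containing leftovers of $\mathcal{J}_h^m$ with $h>h^{\ast}$, so they need not lie in $\mathcal{J}_{i+1}^m\cup\cdots\cup\mathcal{J}_{h^{\ast}}^m$ and do not count against the constraint $\vert X\cap(\mathcal{J}_{i+1}^m\cup\cdots\cup\mathcal{J}_{h^{\ast}}^m)\vert\le(h^{\ast}-i)b$; for instance with $b=1$, $\mathcal{J}_2^m=\{J^{\ast}\}$ and $\mathcal{J}_3^m=\{g_1,g_2\}$ with $p_{g_1}>p_{g_2}>p_{J^{\ast}}$, batch $B_2^m$ is filled by $g_2\in\mathcal{J}_3^m$, and no contradiction with that constraint arises. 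A repair is available along your lines --- for each threshold $t$ show $\vert\{e\in\bigcup_{h>i}B_h^m:p_e\ge t\}\vert\ge\vert\{e\in\bigcup_{h>i}B_h:p_e\ge t\}\vert$ by the cardinality argument applied to the heavy jobs alone, then integrate over $t$ --- but as written the central inequality is not established. The paper's exchange-based transformation avoids all of this and, unlike the pure mass argument, also certifies en route that the intermediate schedules remain feasible for $F^m$.
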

	
	\begin{proof}
		We will show a transformation of  $\sigma $ into ${{\sigma }^{m}}$. To prove the lemma, we ensure that the start time and completion time of any batch will not increase during the transformation.
		
		Algorithm AUX1 forms the batches of schedule ${{\sigma }^{m}}$ backwards (i.e., from higher-numbered batches to lower-numbered batches), in a greedy manner such that each batch consists of as many as possible of the available jobs with largest processing times. We will compare the batches $B_{i}^{m}$ and ${{B}_{i}}$ in schedules ${{\sigma }^{m}}$ and $\sigma $, and modify ${{B}_{i}}$ accordingly to get $B^{'}_{i}$ such that $B^{'}_{i}=B_{i}^{m}$, $i=n,n-1,\ldots ,1$.
		
		We first compare $B_{n}^{m}$ and ${{B}_{n}}$. Note that $\vert B_{n}^{m} \vert\ge \vert {{B}_{n}} \vert$. Let $A_{n}^{m}$ denote the set of the $\vert {{B}_{n}} \vert $ jobs with largest processing times in $B_{n}^{m}$. Let ${{J}_{j}}$ be one of the jobs in $A_{n}^{m}\backslash {{B}_{n}}$. In $\sigma $, ${{J}_{j}}$ is in an earlier batch than ${{B}_{n}}$. Let ${{J}_{{{j}'}}}$ be a job in ${{B}_{n}}\backslash A_{n}^{m}$. Obviously, we have: ${{p}_{j}}\ge {{p}_{{{j}'}}}$. We exchange ${{J}_{j}}$ and ${{J}_{{{j}'}}}$ in $\sigma $. Note that in $\sigma $ the start time and completion time of any batch will not increase. Repeat the operation until all the jobs in ${{B}_{n}}$ are replaced by the jobs in $A_{n}^{m}$. We then move the jobs in $B_{n}^{m}\backslash A_{n}^{m}$  from other batches in $\sigma $ into modified ${{B}_{n}}$. Thereby, we get $B^{'}_{n}=B_{n}^{m}$. The obtained schedule still satisfies ${{F}^{m}}$ because all the jobs in  ${{B}_{n}}$ and $B_{n}^{m}$ belong to ${{\mathcal{J}}_{n}}$.
		
		Now suppose that for $i=n,n-1,\ldots ,k+1$ ($1\le k\le n-1$) we have $B^{'}_{i}=B_{i}^{m}$, and no batch starts or completes later during the transformation. (At this point, although the batches ${{B}_{1}},{{B}_{2}},\cdots ,{{B}_{k}}$ may also have been changed, we will not change the notations for these batches.) Note that $\vert B_{k}^{m} \vert \ge \vert {{B}_{k}} \vert $. Let $A_{k}^{m}$ denote the set of the $\vert {{B}_{k}} \vert $ jobs with largest processing times in $B_{k}^{m}$. Let ${{J}_{j}}$ be one of the jobs in $A_{k}^{m}\backslash {{B}_{k}}$. In $\sigma $, ${{J}_{j}}$ is in an earlier batch than ${{B}_{k}}$. Let ${{J}_{{{j}'}}}$ be a job in ${{B}_{k}}\backslash A_{k}^{m}$. We have: ${{p}_{j}}\ge {{p}_{{{j}'}}}$. We exchange ${{J}_{j}}$ and ${{J}_{{{j}'}}}$ in $\sigma $. Repeat the operation until all the jobs in ${{B}_{k}}$ are replaced by the jobs in $A_{k}^{m}$. We then move the jobs in $B_{k}^{m}\backslash A_{k}^{m}$  from other batches in $\sigma $ into modified ${{B}_{k}}$. Thereby, we get ${{{B}'}_{k}}=B_{k}^{m}$. The obtained schedule still satisfies ${{F}^{m}}$ because all the jobs in  ${{B}_{k}}$ and $B_{k}^{m}$ belong to $\bigcup _{h=k}^{n}{{\mathcal{J}}_{h}}$. Since we always exchange a longer job in an earlier batch with a shorter job (possibly with processing time zero) in a later batch, the start time and completion time of any batch will not increase during the transformation. Using Lemma \ref{start}, it is easily checked that properties (1) through (4) hold.
		
	\end{proof}
	
	Lemmas \ref{equal1} and \ref{equal2} below and their proofs are adapted from \citep{Lazarev17} for the serial-batch setting.
	
	\begin{lemma}\label{equal1}
		For $m=1,2,\ldots $, Algorithm AUX1 ensures that  $\Pi \left( \mathcal{J},{{F}^{m}},y \right)=\Pi \left( \mathcal{J},{{F}^{m-1}},y \right)$.
	\end{lemma}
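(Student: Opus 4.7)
I would prove the identity by mutual inclusion, exploiting the fact that $F^m$ is exactly the CSF produced from $F^{m-1}$ by the adjustments carried out in Step~4 of iteration $m-1$, driven by the schedule $\sigma^{m-1}$.

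For the forward inclusion $\Pi(\mathcal{J}, F^m, y) \subseteq \Pi(\mathcal{J}, F^{m-1}, y)$, I would observe that every adjustment in Step~4 relocates a job $J_j$ from some component $\mathcal{J}_{i'}^{m-1}$ (with $i' \geq i$, where $J_j \in B_i^{m-1}$) into $\mathcal{J}_k^{m-1}$, where $k$ is the largest index satisfying $f_j(C(B_k^{m-1})) < y$. Because the move is triggered precisely when $f_j(C(B_i^{m-1})) \geq y$, we have $k < i \leq i'$, so each moved job strictly decreases its component index. Membership in a lower-indexed component is a strictly tighter constraint on the batch that may host the job, so any schedule satisfying $F^m$ automatically satisfies $F^{m-1}$; the maximum-cost requirement $<y$ is identical on both sides.

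The reverse inclusion $\Pi(\mathcal{J}, F^{m-1}, y) \subseteq \Pi(\mathcal{J}, F^m, y)$ is the crux. Given $\sigma \in \Pi(\mathcal{J}, F^{m-1}, y)$, I need to verify that $\sigma$ satisfies $F^m$. Only jobs that Step~4 relocated can possibly violate condition~(1) of the CSF definition, so I fix such a job $J_j$, now placed into $\mathcal{J}_k^m$, and let $k'$ denote the index of the batch that contains $J_j$ in $\sigma$. Since $J_j \in B_{k'}(\sigma)$, we immediately have $f_j(C(B_{k'}(\sigma))) \leq f_{\max}(\sigma) < y$. I then invoke Lemma~\ref{early}(3) on $\sigma$ and $\sigma^{m-1}$, both feasible subject to $F^{m-1}$, to obtain $C(B_{k'}^{m-1}) \leq C(B_{k'}(\sigma))$; monotonicity of $f_j$ then upgrades the bound to $f_j(C(B_{k'}^{m-1})) < y$. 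By the maximality of $k$, this forces $k' \leq k$, so $J_j \in \mathcal{J}_k^m \subseteq \bigcup_{h=k'}^n \mathcal{J}_h^m$, which is exactly what condition~(1) of $F^m$ requires. Condition~(2) concerns only the positions of nonempty batches in $\sigma$ and is insensitive to which CSF we test against.

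The main obstacle is the chain of implications in the reverse inclusion: the cost bound available on $\sigma$ is expressed through completion times in $\sigma$ itself, whereas the index $k$ was determined using the completion times in $\sigma^{m-1}$. Lemma~\ref{early}(3) is exactly the bridge that reconciles the two, and once it is invoked the remainder is bookkeeping driven by the regularity of $f_j$ and the maximality defining $k$.
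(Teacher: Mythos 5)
Your proposal is correct and follows essentially the same route as the paper: the forward inclusion is immediate because the adjustments only tighten the CSF, and the reverse inclusion hinges on Lemma~\ref{early}(3) to transfer the completion-time comparison from $\sigma^{m-1}$ to an arbitrary $\sigma\in\Pi\left(\mathcal{J},F^{m-1},y\right)$, combined with the regularity of $f_j$ and the choice of $k$. The only cosmetic difference is that you argue contrapositively from the bound $f_j(C(B_{k'}(\sigma)))<y$ to conclude $k'\le k$, whereas the paper shows directly that any batch of index larger than $k$ would force cost at least $y$; these are the same argument.
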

	
	\begin{proof}
		Obviously, we have $\Pi \left( \mathcal{J},{{F}^{m}},y \right)\subseteq \Pi \left( \mathcal{J},{{F}^{m-1}},y \right)$. Thus, it is sufficient to prove that $\Pi \left( \mathcal{J},{{F}^{m-1}},y \right)\subseteq \Pi \left( \mathcal{J},{{F}^{m}},y \right)$.
		
		Let $\sigma \in \Pi \left( \mathcal{J},{{F}^{m-1}},y \right)$. Let $\mathcal{J}_{i}^{m-1}\in {{F}^{m-1}}$ and $\mathcal{J}_{i}^{m}\in {{F}^{m}}$ with ${{J}_{j}}\in \bigcup _{h=i+1}^{n}\mathcal{J}_{h}^{m-1}$ but ${{J}_{j}}\in \mathcal{J}_{i}^{m}$. Then we have: ${{f}_{j}}(C(B_{_{i+1}}^{m-1}({{\sigma }^{m-1}})))\ge y$. By Lemma \ref{early}, we have: ${{f}_{j}}(C(B_{_{i+1}}^{m-1}(\sigma )))\ge {{f}_{j}}(C(B_{_{i+1}}^{m-1}({{\sigma }^{m-1}})))\ge y$, which means that in $\sigma $ job ${{J}_{j}}$ cannot be assigned to a batch whose index is larger than $i$, and $B_{_{i+1}}^{m-1}(\sigma ),B_{_{i+2}}^{m-1}(\sigma ),\ldots ,B_{_{n}}^{m-1}(\sigma )$ are all nonempty batches. It follows that $\sigma \in \Pi \left( \mathcal{J},{{F}^{m}},y \right)$.
		
	\end{proof}
	
	Combining Lemmas \ref{early} and \ref{equal1}, we get the following theorem, which shows that Algorithm AUX1 solves the Auxiliary Problem.
	
	\begin{theorem}\label{last}
		Let ${{\sigma }^{last}}$ be the schedule obtained at the last iteration of Algorithm AUX1. If ${{\sigma }^{last}}=\varnothing $, then $\Pi \left( \mathcal{J},F,y \right)=\varnothing $; Otherwise ${{\sigma }^{last}}$ is a schedule which has minimum makespan among all schedules in $\Pi \left( \mathcal{J},F,y \right)$.
		
	\end{theorem}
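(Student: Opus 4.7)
The strategy is to combine Lemmas~\ref{early} and~\ref{equal1} with a short termination argument for Algorithm~AUX1. Termination follows by monitoring the potential $\Phi(F^m):=\sum_{h=1}^{n}h\cdot|\mathcal{J}_h^m|$: every Step~4 move sends a job from $\mathcal{J}_h^m$ to $\mathcal{J}_k^m$ with $k<h$, so $\Phi$ strictly decreases at each reassignment, and being a non-negative integer bounded by $n^{2}$ it bottoms out after finitely many iterations, so $F^{last}$ is well defined. Iterating Lemma~\ref{equal1} across all iterations then yields the telescoping identity $\Pi(\mathcal{J},F,y)=\Pi(\mathcal{J},F^{0},y)=\cdots=\Pi(\mathcal{J},F^{last},y)$, which is the workhorse that transfers every conclusion about $F^{last}$ back to the original $F$.

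If $\sigma^{last}\neq\varnothing$, the algorithm exited the loop only after Step~4 performed no reassignment, so $f_j(C(B_i^{last}))<y$ for every $i$ and every $J_j\in B_i^{last}$, giving $f_{\max}(\sigma^{last})<y$. Since $\sigma^{last}$ satisfies $F^{last}$ by construction we then have $\sigma^{last}\in\Pi(\mathcal{J},F^{last},y)$. Lemma~\ref{early}(4) applied at $m=last$ says $\sigma^{last}$ minimizes the makespan over all feasible schedules subject to $F^{last}$, hence \emph{a~fortiori} over the subset $\Pi(\mathcal{J},F^{last},y)$; the telescoping identity then propagates minimality back to $\Pi(\mathcal{J},F,y)$.

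If $\sigma^{last}=\varnothing$ it suffices to show $\Pi(\mathcal{J},F^{last},y)=\varnothing$, and I would dispatch the three $\varnothing$-return triggers separately. A Step~5 abort violates the CSF capacity bound $\sum_{h=1}^{i}|\mathcal{J}_h^{last}|\le i\cdot b$, so no schedule can satisfy $F^{last}$ at all. A Step~4 abort supplies some $J_j$ with $f_j(C(B_i^{last}))\ge y$ for every $i$; for any candidate $\sigma$ satisfying $F^{last}$ with $J_j\in B_{i_0}(\sigma)$, Lemma~\ref{early}(3) gives $C(B_{i_0}(\sigma))\ge C(B_{i_0}^{last})$ and monotonicity of $f_j$ forces $f_j(C(B_{i_0}(\sigma)))\ge y$, so $\sigma\notin\Pi(\mathcal{J},F^{last},y)$. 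The remaining, and most delicate, trigger is the Step~2 abort, where greedy produces $B_i^{last}=\varnothing$ while some $\mathcal{J}_e^{last}$ with $e<i$ is nonempty. Here I would argue that greedy's rightward packing has already absorbed every job of $\bigcup_{h=i}^{n}\mathcal{J}_h^{last}$ into $B_{i+1}^{last},\ldots,B_n^{last}$, whereas CSF condition~(2)—triggered by the job in $\mathcal{J}_e^{last}$ placed at some position $\le e<i$—forces each of $B_i(\sigma),\ldots,B_n(\sigma)$ to be nonempty in any candidate $\sigma$; combining this combinatorial exhaustion with Lemma~\ref{early}(3) and the monotonicity of the cost functions should rule out $f_{\max}(\sigma)<y$. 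The main obstacle I anticipate is to lift the minimality properties of Lemma~\ref{early} to this aborted-greedy regime, which requires a modest extension of that lemma's swap-and-replace construction so that it speaks about a partial greedy output.
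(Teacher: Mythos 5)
Your skeleton matches the paper's: the paper proves this theorem in a single sentence, by ``combining Lemmas~\ref{early} and~\ref{equal1}'', and your telescoping identity, your treatment of the non-abort exit via Lemma~\ref{early}(4), and your handling of the Step~4 and Step~5 aborts are precisely the details that one-liner leaves implicit; those parts are correct. The problem is the Step~2 abort, which you yourself flag as unresolved --- and it is a genuine gap, not a routine loose end. The combinatorial argument you sketch (greedy has absorbed all of $\bigcup_{h=i}^{n}\mathcal{J}_h^{m}$ into $B_{i+1}^{m},\ldots,B_n^{m}$, while the job in $\mathcal{J}_e^{m}$ forces batches $i,\ldots,n$ to be nonempty in any candidate) does not produce a contradiction on its own, because a non-greedy schedule may deliberately under-fill the rightmost batches so as to leave a job for position $i$. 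Concretely, take $b=2$, $n=5$, $F=\{\varnothing,\{J_5\},\varnothing,\varnothing,\{J_1,J_2,J_3,J_4\}\}$, $y=+\infty$: greedy packs two jobs into each of $B_5$ and $B_4$, finds $B_3=\varnothing$ while $\mathcal{J}_2\neq\varnothing$, and aborts; yet the feasible schedule $B_2=\{J_5\}$, $B_3=\{J_4\}$, $B_4=\{J_2,J_3\}$, $B_5=\{J_1\}$ satisfies $F$. So for an \emph{arbitrary} CSF the Step~2 abort is unsound, and no extension of Lemma~\ref{early}'s swap-and-replace construction to partial greedy outputs can repair it, since the obstruction is not about batch sizes or completion times at all.

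What actually rescues the statement is that the CSFs on which AUX1 is run are not arbitrary: a job sits in $\mathcal{J}_e^{m}$ with small $e$ only because (this is the invariant behind Lemma~\ref{equal1}) its cost is at least $y$ at the completion time of batch $e+1$ of an earlier greedy schedule, hence --- via Lemma~\ref{early}(3) and the monotonicity of the cost functions --- in batch $e+1$ or later of \emph{every} schedule in $\Pi(\mathcal{J},F,y)$. In the example above, the only way $J_5$ could have landed in $\mathcal{J}_2$ during the algorithm is that $f_{J_5}(s+p_{J_5})\ge y$, which already excludes every feasible schedule (any batch containing $J_5$ completes no earlier than $s+p_{J_5}$), so the set is empty for cost reasons rather than packing reasons. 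To close your proof you must therefore carry this cost-based property of $F$ as an explicit hypothesis (or restrict the theorem to the CSFs generated by the enclosing algorithm) and invoke it, rather than Lemma~\ref{early} alone, to discharge the Step~2 abort. In fairness, the paper's own proof does not address this case either.
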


	Next, we give an improved algorithm for solving the auxiliary problem. The idea is motivated from \citep{Koehler13}, which solved a parallel batch scheduling problem on parallel machines where jobs have release dates, deadlines, and identical processing times. (In parallel batch scheduling setting, jobs within each batch are processed in parallel so that the processing time of a batch equals the largest processing time of all jobs in the batch.)
	
	\begin{algorithm}
		\caption{(Algorithm IMPROAUX1)}\label{<IMPROAUX1>}
		\begin{algorithmic}[0]
			\item {\bf Step 1.} Initially, set $m=0$. Let ${{F}^{m}}=\{\mathcal{J}_{1}^{m},\mathcal{J}_{2}^{m},\ldots ,\mathcal{J}_{n}^{m}\}$, where $\mathcal{J}_{i}^{m}={{\mathcal{J}}_{i}}$, $i=1,2,\ldots ,n$.
			\item {\bf Step 2.} The same as Step 2 of Algorithm AUX1.
			\item {\bf Step 3.} The same as Step 3 of Algorithm AUX1.
			\item {\bf Step 4.}  Adjust ${{F}^{m}}$ and ${{\sigma }^{m}}$: For $i=n,n-1,\ldots ,1$, check  the inequality ${{f}_{j}}(C(B_{_{i}}^{m}))<y$ for each job ${{J}_{j}}$ in $B_{_{i}}^{m}$. If the inequality does not hold, delete ${{J}_{j}}$ from its original set in ${{F}^{m}}$ and insert it in $\mathcal{J}_{i-1}^{m}$. Then, adjust ${{\sigma }^{m}}$ as follows. Remove ${{J}_{j}}$ from $B_{_{i}}^{m}$. Let $E(B_{_{i}}^{m})=\{{j}'\vert {j}'\in \bigcup _{g=1}^{i-1}B_{g}^{m}\wedge {{f}_{{{j}'}}}(C(B_{_{i}}^{m}))<y\}$. We distinguish two different cases:
			
			{\bf Case 1.} $E(B_{_{i}}^{m})=\varnothing $.
			
			If the modified $B_{_{i}}^{m}$ (some jobs may have been removed from $B_{_{i}}^{m}$) becomes empty, then return $\varnothing $. Otherwise, if $B_{i-1}^{m},B_{i-2}^{m},\ldots ,B_{1}^{m}$ are all full batches, then return $\varnothing $; else find the largest index $e\le i-1$ such that $\vert B_{e}^{m} \vert <b$. The adjustment of ${{\sigma }^{m}}$ moves right-to-left over consecutive batches, starting from $B_{_{i-1}}^{m}$ and ending by $B_{e}^{m}$. Let $B_{_{c}}^{m}$ denote the current batch. Let ${{J}_{r}}$ denote the current job, initially ${{J}_{j}}$. If $c>e$, let ${{J}_{z}}$ denote the job with the smallest processing time in $B_{_{c}}^{m}\cup \{{{J}_{r}}\}$. Update $B_{_{c}}^{m}$ to be $B_{_{c}}^{m}\cup \{{{J}_{r}}\}\backslash \{{{J}_{z}}\}$. Continue with batch $B_{_{c-1}}^{m}$ and job ${{J}_{z}}$. If $c=e$, simply put ${{J}_{r}}$ into  $B_{_{c}}^{m}$.
			
			{\bf Case 2.} $E(B_{_{i}}^{m})\ne \varnothing $.
			
			Pick the job in $E(B_{_{i}}^{m})$ with the largest processing time, say  ${{J}_{x}}$. Remove ${{J}_{x}}$ from its current batch $B_{_{e}}^{m}$ ($e<i$) and put it into $B_{_{i}}^{m}$. It is easy to check that $e$ is the largest index such that $e\le i-1$ and $\vert B_{e}^{m} \vert <b$. The adjustment of ${{\sigma }^{m}}$ moves right-to-left over consecutive batches, starting from $B_{_{i-1}}^{m}$ and ending by $B_{e}^{m}$, in the same way as described in Case 1.
			
			\item {\bf Step 5.} If no adjustment has been done for ${{F}^{m}}$ after Step 4, then return ${{\sigma }^{m}}$. Otherwise, let ${{F}^{m+1}}={{F}^{m}}$,  ${{\sigma }^{m+1}}={{\sigma }^{m}}$ (${{F}^{m}}$ and ${{\sigma }^{m}}$ have been adjusted already) and then set $m=m+1$. Go to Step 3.
		\end{algorithmic}
	\end{algorithm}
	
	The difference between Algorithms AUX1 and IMPROAUX1 is that IMPROAUX1 runs Step 2 only once (for forming the batches of ${{\sigma }^{0}}$). For $m>0$, ${{\sigma }^{m}}$ is constructed by adjusting ${{\sigma }^{m-1}}$. In fact, ${{\sigma }^{m}}$ can be obtained by running Step 2 for ${{F}^{m}}$ (with a higher time complexity). The correctness of IMPROAUX1 follows. Step 4 of IMPROAUX1 requires $O(n)$ time for adjusting an inequality violation. Since there are $O({{n}^{2}})$ such adjustments, the overall running time of Algorithm IMPROAUX1 is $O({{n}^{3}})$.
	
	Note that Lemma \ref{early}, Lemma \ref{equal1} and Theorem \ref{last} still hold for Algorithm IMPROAUX1.
	
	Now, we are ready to describe the main algorithm for constructing the Pareto set $\Omega (\mathcal{J})$ for $1\vert s-batch,b<n\vert ({{C}_{\max }},{{f}_{\max }})$.

	\begin{algorithm}
		\caption{(Algorithm MAIN1)}\label{<MAIN1>}
		\begin{algorithmic}[0]
			\item {\bf Step 1.} Initially, set $u=0$, ${{y}^{u}}=+\infty $ and ${{\sigma }_{u}}=\varnothing $. Let ${{F}^{u}}=\{\mathcal{J}_{1}^{u},\mathcal{J}_{2}^{u},\ldots ,\mathcal{J}_{n}^{u}\}$, where $\mathcal{J}_{n}^{u}=\mathcal{J}$ and $\mathcal{J}_{i}^{u}=\varnothing $ for $i=1,2,\ldots ,n-1$. Let $\Omega (\mathcal{J})=\varnothing $, $k=0$.
			\item {\bf Step 2.} If $u=0$, run Algorithm IMPROAUX1 to get a schedule ${{\sigma }_{u+1}}$ with minimum makespan among all schedules in $\Pi \left( \mathcal{J},{{F}^{u}},{{y}^{u}} \right)$. If $u>0$, starting with ${{\sigma }_{u}}$ from Step 3 of IMPROAUX1, run IMPROAUX1 to get a schedule ${{\sigma }_{u+1}}$ with minimum makespan among all schedules in $\Pi \left( \mathcal{J},{{F}^{u}},{{y}^{u}} \right)$.  (That is, during the entire run of Algorithm MAIN1, Step 2 of IMPROAUX1 is executed only once. For $u>0$, ${{\sigma }_{u+1}}$ is constructed by adjusting a series of tentative schedules, starting with ${{\sigma }_{u}}$ from Step 3 of IMPROAUX1.) Let upon the completion of Algorithm IMPROAUX1 ${{F}^{u+1}}=\{\mathcal{J}_{1}^{u+1},\mathcal{J}_{2}^{u+1},\ldots ,\mathcal{J}_{n}^{u+1}\}$ be obtained.
			\item {\bf Step 3.}  If ${{\sigma }_{u+1}}\ne \varnothing $, then:
			
			(i) Set ${{\pi }^{*}}(\mathcal{J})={{\sigma }_{u+1}}$ and ${{y}^{u+1}}={{f}_{\max }}({{\sigma }_{u+1}})$.
			
			(ii) If ${{C}_{\max }}({{\sigma }_{u}})<{{C}_{\max }}({{\sigma }_{u+1}})$ and $u>0$, then set $k=k+1$ and ${{\pi }_{k}}={{\sigma }_{u}}$. Include $({{C}_{\max }}({{\pi }_{k}}),{{f}_{\max }}({{\pi }_{k}}),{{\pi }_{k}})$ into $\Omega (\mathcal{J})$.
			
			(iii) Set $u=u+1$. Go to the next iteration (Step 2).
			
			\item {\bf Step 4.} If ${{\sigma }_{u+1}}=\varnothing $, then set $k=k+1$ and ${{\pi }_{k}}={{\sigma }_{u}}$. Include $({{C}_{\max }}({{\pi }_{k}}),{{f}_{\max }}({{\pi }_{k}}),{{\pi }_{k}})$ into $\Omega (\mathcal{J})$ and return $\{\Omega (\mathcal{J}),{{\pi }^{*}}(\mathcal{J})\}$.
		\end{algorithmic}
	\end{algorithm}
	
	\begin{lemma}\label{equal2}
		For $u=0,1,\ldots $, Algorithm MAIN1 ensures that  $\Pi \left( \mathcal{J},{{F}^{u}},{{y}^{u}} \right)=\Pi \left( \mathcal{J},{{F}^{0}},{{y}^{u}} \right)$, where ${{F}^{0}}=\{\varnothing ,\varnothing ,\ldots ,\mathcal{J}\}$ and ${{y}^{0}}=+\infty $.
	\end{lemma}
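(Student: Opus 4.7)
The plan is to prove the statement by induction on $u$, using Lemma \ref{equal1} (which the paper notes still holds for Algorithm IMPROAUX1) as the essential tool. The base case $u=0$ is immediate: the left-hand and right-hand sides agree by the initialization $F^0=\{\varnothing,\varnothing,\ldots,\mathcal{J}\}$ and $y^0=+\infty$ in Step 1 of MAIN1.

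For the inductive step, assume $\Pi(\mathcal{J},F^u,y^u)=\Pi(\mathcal{J},F^0,y^u)$. I first record the strict decrease $y^{u+1}<y^u$: by construction $\sigma_{u+1}\in\Pi(\mathcal{J},F^u,y^u)$, and every member of that set has maximum cost strictly less than $y^u$, so $y^{u+1}=f_{\max}(\sigma_{u+1})<y^u$. Next I apply Lemma \ref{equal1} to the full run of IMPROAUX1 that MAIN1 invokes at its $u$-th iteration: that run operates with fixed threshold $y^u$ and rewrites $F^u$ into $F^{u+1}$ via a sequence of intermediate CSFs, so iterating the lemma along that sequence yields the auxiliary identity $\Pi(\mathcal{J},F^{u+1},y^u)=\Pi(\mathcal{J},F^u,y^u)$. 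Chaining with the inductive hypothesis gives $\Pi(\mathcal{J},F^{u+1},y^u)=\Pi(\mathcal{J},F^0,y^u)$.

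It then remains to lower the threshold from $y^u$ to $y^{u+1}$. The inclusion $\Pi(\mathcal{J},F^{u+1},y^{u+1})\subseteq\Pi(\mathcal{J},F^0,y^{u+1})$ is immediate because every schedule trivially satisfies the all-permissive $F^0$. For the reverse inclusion, I take any $\sigma\in\Pi(\mathcal{J},F^0,y^{u+1})$; since $y^{u+1}<y^u$ we also have $\sigma\in\Pi(\mathcal{J},F^0,y^u)$, and the auxiliary identity established above places $\sigma$ in $\Pi(\mathcal{J},F^{u+1},y^u)$, showing that $\sigma$ satisfies $F^{u+1}$. Combined with $f_{\max}(\sigma)<y^{u+1}$, this yields $\sigma\in\Pi(\mathcal{J},F^{u+1},y^{u+1})$, closing the induction.

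The main subtlety I anticipate is keeping the two thresholds straight: Lemma \ref{equal1} is only directly available at the bound $y^u$ that IMPROAUX1 actually used during the $u$-th outer iteration, so the equality at bound $y^{u+1}$ cannot be read off the lemma as such. The extra monotonicity step exploiting the strict decrease $y^{u+1}<y^u$ is what bridges this gap; once the bookkeeping between $F^u$, $F^{u+1}$, $y^u$ and $y^{u+1}$ is pinned down, the rest is a straightforward chase of definitions.
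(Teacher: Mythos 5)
Your proof is correct and follows essentially the same route as the paper: chain Lemma \ref{equal1} across the intermediate CSFs produced by the run of IMPROAUX1 at threshold $y^u$ to get $\Pi\left(\mathcal{J},F^{u+1},y^u\right)=\Pi\left(\mathcal{J},F^u,y^u\right)$, invoke the inductive hypothesis, and then lower the threshold using $y^{u+1}<y^u$. The paper presents this as an explicit computation for the first two iterations followed by ``repeating the argument,'' whereas you formalize it as an induction and spell out the threshold-lowering step; the content is identical.
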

	
	\begin{proof}
		At the first iteration of Algorithm MAIN1, ${{\sigma }_{1}}$ and ${{F}^{1}}$ were obtained. By Lemma 3, we have: $\Pi \left( \mathcal{J},{{F}^{1}},{{y}^{0}} \right)=\Pi \left( \mathcal{J},{{F}^{0}},{{y}^{0}} \right)$. Since ${{y}^{1}}={{f}_{\max }}({{\sigma }_{1}})<{{y}^{0}}$, we have: $\Pi \left( \mathcal{J},{{F}^{1}},{{y}^{1}} \right)=\Pi \left( \mathcal{J},{{F}^{0}},{{y}^{1}} \right)$. By Lemma \ref{equal1}, we get: $\Pi \left( \mathcal{J},{{F}^{2}},{{y}^{1}} \right)=\Pi \left( \mathcal{J},{{F}^{1}},{{y}^{1}} \right)=\Pi \left( \mathcal{J},{{F}^{0}},{{y}^{1}} \right)$. Hence we get: $\Pi \left( \mathcal{J},{{F}^{2}},{{y}^{2}} \right)=\Pi \left( \mathcal{J},{{F}^{0}},{{y}^{2}} \right)$. Repeating the argument for all iterations we obtain $\Pi \left( \mathcal{J},{{F}^{u}},{{y}^{u}} \right)=\Pi \left( \mathcal{J},{{F}^{0}},{{y}^{u}} \right)$, $u=0,1,\ldots $.
		
	\end{proof}
	
	The proof of the following theorem applies a generic approach of Pareto optimization scheduling. Please refer to \citep{Hoogeveen05,He14,Geng15} for more detailed illustrations.
	
	\begin{theorem} \label{MAIN1}
		Algorithm MAIN1 solves $1\vert s-batch,b<n\vert ({{C}_{\max }},{{f}_{\max }})$ in $O({{n}^{3}})$ time. Consequently, problem $1\vert s-batch,b<n\vert {{f}_{\max }}$ can also be solved in $O({{n}^{3}})$ time.
	\end{theorem}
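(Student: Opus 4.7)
The plan is to verify two things: (i) that Algorithm MAIN1 outputs exactly the Pareto set $\Omega(\mathcal{J})$, and (ii) that the total running time is $O(n^3)$. The correctness hinges on combining Theorem~\ref{last} (which, as noted, also holds for IMPROAUX1) with Lemma~\ref{equal2}. Together they imply that at each outer iteration $u$, the schedule $\sigma_{u+1}$ has minimum makespan among all feasible schedules in $\Pi(\mathcal{J}, F^0, y^u)$, i.e., among all schedules with $f_{\max} < y^u$. Because $y^{u+1} = f_{\max}(\sigma_{u+1}) < y^u$, the thresholds $y^u$ are strictly decreasing, while the makespans $C_{\max}(\sigma_{u+1})$ are non-decreasing.

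For Pareto correctness I would use the standard two-direction argument. To show each output $\pi_k = \sigma_u$ is Pareto optimal: if some $\sigma'$ dominated $\sigma_u$, then either $f_{\max}(\sigma') < y^u$, placing $\sigma'$ in $\Pi(\mathcal{J},F^0,y^u)$ with $C_{\max}(\sigma') \le C_{\max}(\sigma_u) < C_{\max}(\sigma_{u+1})$ (or $\sigma_{u+1}=\varnothing$), contradicting minimality of $\sigma_{u+1}$; or $f_{\max}(\sigma')=y^u$ and $C_{\max}(\sigma')<C_{\max}(\sigma_u)$, contradicting minimality of $\sigma_u$ in $\Pi(\mathcal{J},F^0,y^{u-1})$. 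Conversely, given any Pareto point $(C^*,f^*)$, I would take the largest $u$ with $y^{u-1}>f^*$; then any realizing schedule lies in $\Pi(\mathcal{J},F^0,y^{u-1})$, forcing $C_{\max}(\sigma_u)\le C^*$ and $f_{\max}(\sigma_u)\le f^*$, hence equality by Pareto optimality. Since $\sigma_u$ is the last iterate sharing its makespan (otherwise $\sigma_{u+1}$ would dominate it), the test in Step~3(ii) or Step~4 does include $\sigma_u$ in $\Omega(\mathcal{J})$. The last schedule $\pi^*(\mathcal{J})$ has the smallest $f_{\max}$ attained, yielding the corollary for $1\mid s\text{-}batch, b<n \mid f_{\max}$.

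For complexity, the essential accounting is global rather than per call. Step~2 of IMPROAUX1 is executed only once during the entire run of MAIN1, costing $O(n\log n)$. All subsequent work consists of Step~4 adjustments that each (a) move one job to a strictly smaller index in the current CSF and (b) perform an $O(n)$-time right-to-left rebalancing pass over the batches. By Lemma~\ref{early}, every successive tentative schedule has batch completion times no smaller than the previous, and since tightening $y$ in MAIN1 can only enforce more (never fewer) leftward migrations, this monotonicity extends across the outer iterations as well. Hence each of the $n$ jobs is displaced leftward at most $n-1$ times in total, giving $O(n^2)$ adjustment events over the full execution and an aggregate cost of $O(n^3)$.

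The step I expect to be the main obstacle is the amortization across the outer loop: one must argue that resuming IMPROAUX1 at $\sigma_u$ with the tightened threshold $y^u<y^{u-1}$ preserves the invariant ``batch completion times never decrease, jobs never move rightward in the CSF,'' so that the $O(n^2)$ global bound on leftward job migrations is legitimate and not reset at each outer iteration. Once this invariant is nailed down, both the $O(n^3)$ bound on MAIN1 and the matching bound for the single-criterion $f_{\max}$ problem follow immediately.
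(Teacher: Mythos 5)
Your proposal is correct and follows essentially the same route as the paper: correctness comes from the chain $\Pi\left( \mathcal{J},{{F}^{u}},{{y}^{u}} \right)=\Pi\left( \mathcal{J},{{F}^{0}},{{y}^{u}} \right)$ (Lemma~\ref{equal2}) combined with the constrained-makespan optimality of each ${{\sigma }_{u+1}}$ (Theorem~\ref{last}), and the $O(n^{3})$ bound comes from globally amortizing the leftward job migrations given that Step~2 of IMPROAUX1 is executed only once over the whole run. The cross-iteration monotonicity invariant you flag as the main obstacle is exactly what the paper relies on as well --- it is built into the statement of Step~2 of MAIN1 and underwritten by Lemma~\ref{early} --- so your argument matches the paper's in substance, being if anything slightly more explicit on the two-direction Pareto verification.
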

	
	\begin{proof}
		Note that ${{\sigma }_{i+1}}$ is the schedule with minimum makespan among all schedules in $\Pi \left( \mathcal{J},{{F}^{i}},{{y}^{i}} \right)=\Pi \left( \mathcal{J},{{F}^{0}},{{y}^{i}} \right)$, and ${{y}^{i+1}}={{f}_{\max }}({{\sigma }_{i+1}})$, $i=0,1,\ldots $.
		
		Let $\Omega (\mathcal{J})=\{({{C}_{\max }}({{\pi }_{k}}),{{f}_{\max }}({{\pi }_{k}}),{{\pi }_{k}}):k=1,2,\ldots ,l\}$ be the set returned by Algorithm MAIN1. We have: ${{C}_{\max }}({{\pi }_{1}})<{{C}_{\max }}({{\pi }_{2}})<\cdots <{{C}_{\max }}({{\pi }_{l}})$, ${{f}_{\max }}({{\pi }_{1}})>{{f}_{\max }}({{\pi }_{2}})>\cdots >{{f}_{\max }}({{\pi }_{l}})$.
		
		Let $\pi $ be a Pareto optimal schedule for $1\vert s-batch,b<n\vert ({{C}_{\max }},{{f}_{\max }})$ whose objective vector is  $({{C}_{\max }}(\pi ),{{f}_{\max }}(\pi ))$.
		
		If  ${{f}_{\max }}(\pi )\ge {{f}_{\max }}({{\pi }_{1}})$, since ${{f}_{\max }}(\pi )<+\infty $ and ${{\pi }_{1}}$ is optimal for $1\vert s-batch,b<n,{{f}_{\max }}<+\infty \vert {{C}_{\max }}$, we have ${{C}_{\max }}(\pi )\ge {{C}_{\max }}({{\pi }_{1}})$. Therefore, $({{C}_{\max }}(\pi ),{{f}_{\max }}(\pi ))$ cannot be a Pareto optimal point unless  $({{C}_{\max }}(\pi ),{{f}_{\max }}(\pi ))=({{C}_{\max }}({{\pi }_{1}}),{{f}_{\max }}({{\pi }_{1}}))$.
		
		If ${{f}_{\max }}({{\pi }_{k}})>{{f}_{\max }}(\pi )\ge {{f}_{\max }}({{\pi }_{k+1}})$ ($1\le k<l$), since ${{\pi }_{k+1}}$ is optimal for $1\vert s-batch,b<n,{{f}_{\max }}<{{f}_{\max }}({{\pi }_{k}})\vert {{C}_{\max }}$, we have ${{C}_{\max }}(\pi )\ge {{C}_{\max }}({{\pi }_{k+1}})$. Therefore, $({{C}_{\max }}(\pi ),{{f}_{\max }}(\pi ))$ cannot be a Pareto optimal point unless  $({{C}_{\max }}(\pi ),{{f}_{\max }}(\pi ))=({{C}_{\max }}({{\pi }_{k+1}}),{{f}_{\max }}({{\pi }_{k+1}}))$.
		
		If ${{f}_{\max }}(\pi )<{{f}_{\max }}({{\pi }_{l}})$, then $\pi $ does not exist.
		
		The above analysis shows that $\Omega (\mathcal{J})$ consists of all Pareto optimal points together with the corresponding schedules.
		
		At the last iteration of Algorithm MAIN1, we get ${{\sigma }_{u+1}}=\varnothing $, which means that $\Pi \left( \mathcal{J},{{F}^{u}},{{y}^{u}} \right)=\Pi \left( \mathcal{J},{{F}^{0}},{{y}^{u}} \right)=\varnothing $ (by Lemma \ref{equal2}). That is, there is no feasible schedule with maximum cost lower than ${{y}^{u}}$. Thus, ${{\pi }^{*}}(\mathcal{J})={{\sigma }_{u}}$ is optimal for single criterion ${{f}_{\max }}$. 
			\end{proof}
		
		Similarly to the analysis of time complexity of Algorithm IMPROAUX1, we know that the overall running time of Algorithm MAIN1 is $O({{n}^{3}})$.

		In this section, we present experimental results of our $1\vert s-batch,b<n\vert ({{C}_{\max }},{{f}_{\max }})$ algorithm. The algorithm, implemented in PyCharm, was tested on randomly generated instances. We varied key factors. In each experiment, we assign a different number of jobs\((J\in\{10,100\})\), randomly generate processing times and deadlines for each jobs, and provide an appropriate batch capacity based on the number of jobs. 
		The processing times uniformly distributed as integers with $p \in \{40,60\}$. The deadlines times uniformly distributed as integers with $d \in \{60,90\}$.The average and maximum running times of Algorithm  in the paper is showed in Table 1. 
		
		For this problem, a research method with a time complexity of \(O(n^{4})\) was proposed in previous studies. In this section, we compare two methods. By allocating jobs of different orders of magnitude and randomly generating the processing time of each job while keeping the range of the deadline consistent. Table 2 presents the average and maximum running times of the previous original algorithm.

		\begin{table}
			\centering
			\caption{}
			{\renewcommand{\arraystretch}{3}}
				\begin{tabular}{ccc}
					\toprule
					number & Average-time(s) & Max-time(s) \\
					\midrule
					10 & 1.2922E-04 & 2.3246E-04 \\
					20 & 3.1762E-04 & 7.5841E-04 \\
					30 & 5.4915E-04 & 7.5346E-04 \\
					40 & 8.6671E-04 & 1.1714E-03 \\
					50 & 1.2306E-03 & 1.2851E-03 \\
					60 & 1.6678E-03 & 2.4214E-03 \\
					70 & 2.1799E-03 & 2.2566E-03 \\
					80 & 2.7600E-03 & 4.9174E-03 \\
					90 & 3.3715E-03 & 3.4635E-03 \\
					100 & 4.0715E-03 & 4.1931E-03 \\
					\bottomrule
				\end{tabular}
			\end{table}
			
				\begin{table}
				\centering
				\caption{}
				{\renewcommand{\arraystretch}{3}}
				\begin{tabular}{ccc}
					\toprule
					number & Average-time(s) & Max-time(s) \\
					\midrule
					10 & 8.6974E-04 & 9.0196E-04 \\
					20 & 1.7994E-03 & 1.8218E-03 \\
					30 & 2.8178E-03 & 4.5974E-03 \\
					40 & 3.9610E-03 & 8.0907E-03 \\
					50 & 5.0637E-03 & 8.0278E-03 \\
					60 & 6.1722E-03 & 6.2468E-03 \\
					70 & 7.4670E-03 & 9.6190E-03 \\
					80 & 8.8014E-03 & 1.0946E-02 \\
					90 & 1.0215E-02 & 1.2403E-02 \\
					100 & 1.1952E-02 & 2.6776E-02 \\
					\bottomrule
				\end{tabular}
			\end{table}
			
			In order to deeply analyze the operational performance of the algorithm, we utilize the visualization tool of line charts to visually present the differences between the average and maximum running times of the algorithm. By fitting the data points of the average and maximum running times under different job scales and drawing line charts, we can clearly demonstrate the trends of changes with the job scale and the differences between them.
			
			\begin{tikzpicture}
				\begin{axis}[
					xlabel={Number of Jobs},
					ylabel={Time (s)},
					ymin=0,
					ymax=0.02,
					ytick={0,0.01,0.02,0.03},
					xtick={10,20,30,40,50,60,70,80,90,100},
					legend pos=north west,
					grid=both,
					title={Comparison of Average and Max Times },
					]
					
					\addplot[
					color=blue,
					mark=o,
					]
					coordinates {
						(10, 1.2922E-04) (20, 3.1762E-04) (30, 5.4915E-04) (40, 8.6671E-04) (50, 1.2306E-03) (60, 1.6678E-03) (70, 2.1799E-03) (80, 2.7600E-03) (90, 3.3715E-03) (100, 4.0715E-03)
					};
					\addlegendentry{MAIN1 Average Time}

					\addplot[
					color=green,
					mark=o,
					]
					coordinates {
						(10, 8.6974E-04) (20, 1.7994E-03) (30, 2.8178E-03) (40, 3.9610E-03) (50, 5.0637E-03) (60, 6.1722E-03) (70, 7.4670E-03) (80, 8.8014E-03) (90, 1.0215E-02) (100, 1.1952E-02)
					};
					\addlegendentry{Algorithm PO Average Time}

				\end{axis}
			\end{tikzpicture}

		\section{The unbounded case with strict precedence relation}\label{sec4}
		
		In this section we will modify the algorithm developed in the preceding section to solve $1\vert \prec ,s-batch,b\ge n\vert ({{C}_{\max }},{{f}_{\max }})$. As a by-product, the last schedule constructed by the modified algorithm is optimal for $1\vert \prec ,s-batch,b\ge n\vert {{f}_{\max }}$.
		
		Without causing confusion, we can re-use some notations and terminologies defined in the preceding section, such as $S({{B}_{i}})$, $C({{B}_{i}})$, Candidate Set Family (CSF) $F=\{{{\mathcal{J}}_{1}},{{\mathcal{J}}_{2}},\ldots ,{{\mathcal{J}}_{n}}\}$, $\Pi \left( \mathcal{J},F,y \right)$, to name a few. In addition, let $O(j,F)$ denote the ordinal number of job ${{J}_{j}}$ in CSF $F$, i.e., $O(j,F)=i$ means that ${{J}_{j}}\in {{\mathcal{J}}_{i}}$ in $F$. Note that Lemma \ref{start} still holds for $1\vert \prec ,s-batch,b\ge n\vert ({{C}_{\max }},{{f}_{\max }})$.
		
		Precedence relation on $\mathcal{J}$ can be represented as a graph $G=<V,E>$, where $V$ consists of all the jobs in $\mathcal{J}$, and $E$ consists of all the edges $<{{J}_{p}},{{J}_{j}}>$ if ${{J}_{p}}\prec {{J}_{j}}$ and there is no ${{J}_{k}}$ such that ${{J}_{p}}\prec {{J}_{k}}\prec {{J}_{j}}$. To determine the set of direct predecessors for each job, we purposely use the inverse adjacency list representation of $G$. For each job ${{J}_{j}}$, its inverse adjacency list contains all the jobs ${{J}_{p}}$ such that there is an edge $<{{J}_{p}},{{J}_{j}}>\in E$.
		
		There is an natural initial CSF ${{F}^{0}}=\{\mathcal{J}_{1}^{0},\mathcal{J}_{2}^{0},\ldots ,\mathcal{J}_{n}^{0}\}$, where $\mathcal{J}_{i}^{0}$ consists of all the vertices whose out-degrees are zero in $G\backslash \bigcup _{h=i+1}^{n}\mathcal{J}_{h}^{0}$, $i=n,n-1,\ldots ,1$. Clearly, ${{F}^{0}}$ obeys the precedence relation and can be constructed from $G$ in $O({{n}^{2}})$ time.
		
		We first modify Algorithm AUX1 to solve the following auxiliary problem for the unbounded model with strict precedence relation.
		
		{\bf Auxiliary Problem:} Find a schedule $\sigma $ in $\Pi \left( \mathcal{J},F,y \right)$ with minimum makespan.
		\begin{algorithm}
			\caption{(Algorithm Aux2)}\label{<Aux2>}
			\begin{algorithmic}[0]
				\item {\bf Step 1.} Initially, set $m=0$. Let ${{F}^{m}}=\{\mathcal{J}_{1}^{m},\mathcal{J}_{2}^{m},\ldots ,\mathcal{J}_{n}^{m}\}$, where $\mathcal{J}_{i}^{m}={{\mathcal{J}}_{i}}$, $i=1,2,\ldots ,n$. Determine $O(j,{{F}^{m}})$ for job ${{J}_{j}}$, $j=1,2,\ldots ,n$.
				\item {\bf Step 2.} Form the batches of schedule ${{\sigma }^{m}}$: For $i=n,n-1,\ldots ,1$, let batch $B_{_{i}}^{m}=\mathcal{J}_{i}^{m}$.
				\item {\bf Step 3.} Schedule the batches in ${{\sigma }^{m}}$: Let $S(B_{_{1}}^{m})=s(B_{_{1}}^{m})$, $S(B_{_{i}}^{m})=C(B_{_{i-1}}^{m})+s(B_{_{i}}^{m})$, $i=2,\ldots ,n$.
				\item {\bf Step 4.}  Adjust ${{F}^{m}}$: We handle the batches in ${{\sigma }^{m}}$ in decreasing order of their indices and modify the components of ${{F}^{m}}$ accordingly. Batch $B_{_{i}}^{m}$ is handled after all the batches  $B_{_{n}}^{m},B_{_{n-1}}^{m},\ldots ,B_{_{i+1}}^{m}$ have already been handled. It is possible that the ordinal numbers of some jobs in the lower-numbered components of ${{F}^{m}}$ may have been modified when a higher-numbered batch in ${{\sigma }^{m}}$ is handled. Suppose that we are handling batch $B_{_{i}}^{m}$ ($i=n,n-1,\ldots ,1$). Check  the inequality ${{f}_{j}}(C(B_{_{i}}^{m}))<y$ for each job ${{J}_{j}}$ in $B_{_{i}}^{m}$. If the inequality holds, then ${{k}_{1}}=i$; Otherwise let ${{k}_{1}}$ be the largest index such that ${{f}_{j}}(C(B_{_{{{k}_{1}}}}^{m}))<y$. If ${{k}_{1}}$ does not exist (i.e., ${{f}_{j}}(C(B_{_{1}}^{m}))\ge y$), then return $\varnothing $; otherwise update the ordinal number of ${{J}_{j}}$ in ${{F}^{m}}$ to be $k=\min \{{{k}_{1}},O(j,{{F}^{m}})\}$. If $k=i$, then ${{J}_{j}}$ will not be moved (i.e., ${{J}_{j}}$ is still in $\mathcal{J}_{i}^{m}$). Otherwise, delete ${{J}_{j}}$ from $\mathcal{J}_{i}^{m}$  in ${{F}^{m}}$ and insert it in $\mathcal{J}_{k}^{m}$. For the latter case (i.e., $k<i$), if $\mathcal{J}_{i}^{m}$ becomes empty then return $\varnothing $; otherwise we further check the inverse adjacency list of ${{J}_{j}}$. For each job ${{J}_{p}}$ in this list, update its ordinal number in ${{F}^{m}}$ to be $\min \{k-1,O(p,{{F}^{m}})\}$ (but ${{J}_{p}}$ is not moved yet).
				\item {\bf Step 5.} If no adjustment has been done for ${{F}^{m}}$ after Step 4 (i.e., for all pairs $i,j$ the inequality is correct), then return ${{\sigma }^{m}}$. Otherwise, let ${{F}^{m+1}}={{F}^{m}}$ (${{F}^{m}}$ has been adjusted already) and then set $m=m+1$. Go to the next iteration (Step 2).
			\end{algorithmic}
		\end{algorithm}
		
		Step 1 can be implemented in $O(n)$ time. Steps 2, 3 and 5 require $O(n)$ time in each iteration. Since there are at most $O({{n}^{2}})$ iterations, Steps 2, 3 and 5 can be implemented in $O({{n}^{3}})$ time for all iterations.
		
		Let us explain Step 4 a little. If we use adjacency list instead of inverse adjacency list, then Step 4 can be implemented easily in $O({{n}^{2}})$ time for each iteration. Suppose that we are handling batch $B_{_{i}}^{m}$. For each job ${{J}_{j}}$ in $B_{_{i}}^{m}$, we simply update $O(j,{{F}^{m}})$ to be ${k}'=\min \{{{k}_{1}},{{\min }_{<{{J}_{j}},{{J}_{q}}>\in E}}O(q,{{F}^{m}})-1\}$, where ${{k}_{1}}$ is just the one defined in Step 4 of Algorithm AUX2. If ${k}'<i$, delete ${{J}_{j}}$ from $\mathcal{J}_{i}^{m}$  in ${{F}^{m}}$ and insert it in $\mathcal{J}_{{{k}'}}^{m}$. The implementation is easy to understand, since the ordinal number of any job can be modified only once, when the batch containing it is handled. The disadvantage of this implementation is that it requires $O({{n}^{4}})$ time for all iterations.
		
		Instead, we use inverse adjacency list in Step 4, and we distinguish the jobs in $B_{_{i}}^{m}$ between moved and unmoved. For the unmoved jobs in $B_{_{i}}^{m}$, we do nothing again, because they will not affect the ordinal numbers of their predecessors. On the other hand, for each moved job, we update the ordinal numbers of its direct predecessors in ${{F}^{m}}$. Note that any job can be moved at most $n-1$ times. Therefore, the complexity contribution of Step 4 for all iterations is $O({{n}^{3}})$.
		
		In a word, the running time of Algorithm AUX2 is $O({{n}^{3}})$.
		
		Lemma \ref{early}, Lemma \ref{equal1}, and Theorem \ref{last} still hold for Algorithm AUX2.
		
		We now present the main algorithm for constructing the Pareto set $\Omega (\mathcal{J})$ for $1\vert \prec ,s-batch,b\ge n\vert ({{C}_{\max }},{{f}_{\max }})$.
		
		\begin{algorithm}
			\caption{(Algorithm MAIN2)}\label{<MAIN2>}
			\begin{algorithmic}[0]
				\item {\bf Step 1.} Initially, set $u=0$, ${{y}^{u}}=+\infty $ and ${{\sigma }_{u}}=\varnothing $. Let ${{F}^{u}}$ be just the one defined at the beginning of this section (the natural initial CSF ${{F}^{0}}$). Let $\Omega (\mathcal{J})=\varnothing $, $k=0$.
				\item {\bf Step 2.} Run Algorithm AUX2 to get a schedule ${{\sigma }_{u+1}}$ with minimum makespan among all schedules in $\Pi \left( \mathcal{J},{{F}^{u}},{{y}^{u}} \right)$. Let upon the completion of Algorithm AUX2 ${{F}^{u+1}}=\{\mathcal{J}_{1}^{u+1},\mathcal{J}_{2}^{u+1},\ldots ,\mathcal{J}_{n}^{u+1}\}$ be obtained.
				\item {\bf Step 3.} The same as Step 3 of Main1.
				
				\item {\bf Step 4.} The same as Step 4 of Main1.
			\end{algorithmic}
		\end{algorithm}
		
	In this section, we present experimental results of our $1\vert \prec ,s-batch,b\ge n\vert ({{C}_{\max }},{{f}_{\max }})$ algorithm. The algorithm, implemented in PyCharm, was tested on randomly generated instances. We varied key factors. In each experiment, we assign a different number of jobs\((J\in\{10,100\})\), randomly generate the processing time, deadline, and priority relationship for each job. The average and maximum running times of Algorithm  in the paper is showed in Table 2.
		
		\begin{table}[h]
			\centering
			\caption{}
			{\renewcommand{\arraystretch}{3}}
				\begin{tabular}{ccc}
					\toprule
					number &  Average-time(s) & Max-time(s) \\
					\midrule
					10 & 1.2894E - 02 & 3.0905E - 02 \\
					20 & 2.8576E - 02 & 9.443E - 02 \\
					30 & 4.8141E - 02 & 1.8098E - 01 \\
					40 & 6.6872E - 02 & 3.5355E - 01 \\
					50 & 1.6463E - 01 & 8.4042E - 01 \\
					60 & 1.7970E - 01 & 9.982E - 01 \\
					70 & 2.7272E - 01 & 2.0207E + 00 \\
					80 & 3.7795E - 01 & 2.4787E + 00 \\
					90 & 5.2272E - 01 & 3.9724E + 00 \\
					100 & 6.3204E - 01 & 5.0993E + 00 \\
					\bottomrule
				\end{tabular}
			\end{table}

			Lemma \ref{equal2} still holds for Algorithm Main2 and the natural initial CSF ${{F}^{0}}$. We then get:
			
			\begin{theorem} \label{MAIN2}
				Algorithm MAIN2 solves $1\vert \prec ,s-batch,b\ge n\vert ({{C}_{\max }},{{f}_{\max }})$ in $O({{n}^{3}})$ time. Consequently, problem $1\vert \prec ,s-batch,b\ge n\vert {{f}_{\max }}$  can  also be solved in $O({{n}^{3}})$ time.
			\end{theorem}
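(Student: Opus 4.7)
The plan is to mirror the proof of Theorem \ref{MAIN1} almost verbatim, since MAIN2 has the same outer structure as MAIN1 with AUX2 replacing IMPROAUX1 and the natural initial CSF ${F}^{0}$ from the precedence graph replacing $\{\varnothing,\ldots,\mathcal{J}\}$. The paper already notes that Lemmas \ref{early}, \ref{equal1}, Theorem \ref{last}, and Lemma \ref{equal2} all carry over to AUX2 and this ${F}^{0}$; moreover, the generic Pareto-scheduling argument of Theorem \ref{MAIN1} uses only the abstract property that $\sigma_{u+1}$ minimizes $C_{\max}$ over all feasible schedules with $f_{\max}<y^{u}$, so no model-specific rework is needed for the unbounded-with-precedence setting.

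For correctness, I would combine Theorem \ref{last} applied to AUX2 with the MAIN2 version of Lemma \ref{equal2} to conclude that $\sigma_{u+1}$ is a minimum-makespan schedule in $\Pi(\mathcal{J},{F}^{0},y^{u})$, that is, among all precedence-feasible schedules with $f_{\max}<y^{u}$. The schedules inserted into $\Omega(\mathcal{J})$ then satisfy $C_{\max}(\pi_{1})<\cdots<C_{\max}(\pi_{l})$ and $f_{\max}(\pi_{1})>\cdots>f_{\max}(\pi_{l})$. I would then replay the three-way case analysis of Theorem \ref{MAIN1}, separating an arbitrary Pareto optimal schedule $\pi$ by whether $f_{\max}(\pi)\ge f_{\max}(\pi_{1})$, whether $f_{\max}(\pi_{k})>f_{\max}(\pi)\ge f_{\max}(\pi_{k+1})$ for some inner $k$, or whether $f_{\max}(\pi)<f_{\max}(\pi_{l})$, to conclude that the objective vector of $\pi$ coincides with some $(C_{\max}(\pi_{k}),f_{\max}(\pi_{k}))\in\Omega(\mathcal{J})$. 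The last case is ruled out because the termination condition $\sigma_{u+1}=\varnothing$ together with Lemma \ref{equal2} yields $\Pi(\mathcal{J},{F}^{0},y^{u})=\varnothing$, which simultaneously shows that the stored $\pi^{*}(\mathcal{J})$ is optimal for single-criterion $f_{\max}$ and delivers the claimed by-product for $1\vert \prec, s-batch, b\ge n\vert f_{\max}$.

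The main obstacle is the $O(n^{3})$ running time: there can be $\Theta(n)$ Pareto points and each single invocation of AUX2 already costs $O(n^{3})$, so a call-by-call bound would only give $O(n^{4})$. To tighten this to $O(n^{3})$ I would amortize across the entire run of MAIN2, exactly as was done for MAIN1. The crucial observation is that MAIN2 feeds ${F}^{u+1}$ (returned by the $u$-th call of AUX2) back into the $(u{+}1)$-st call, and every adjustment in Step 4 of AUX2 can only \emph{decrease} the ordinal number $O(j,\cdot)$ of the affected job; hence the sequence ${F}^{0},{F}^{1},\ldots$ produced across all outer iterations of MAIN2 is globally monotone. Consequently each of the $n$ jobs is demoted at most $n-1$ times over the entire execution, so the total number of Step 4 migrations, summed over all outer iterations of AUX2 and over all calls of AUX2 inside MAIN2, is $O(n^{2})$. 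Each migration costs $O(n)$ time (the inequality scan of its batch, the CSF updates, and one walk through the moved job's inverse adjacency list), contributing $O(n^{3})$ overall; the number of outer iterations of AUX2 is bounded by the number of migrations plus one per call, i.e., $O(n^{2})$, so Steps 2, 3, 5 contribute another $O(n^{3})$. This yields the overall $O(n^{3})$ bound and completes the proof.
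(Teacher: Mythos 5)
Your proposal is correct and follows essentially the same route as the paper, which obtains Theorem~\ref{MAIN2} by observing that Lemma~\ref{early}, Lemma~\ref{equal1}, Theorem~\ref{last} and Lemma~\ref{equal2} all carry over to Algorithm AUX2 with the natural initial CSF and then replaying the Pareto-enumeration argument of Theorem~\ref{MAIN1}. Your explicit amortization of the running time---using the global monotonicity of the ordinal numbers across all calls of AUX2 to bound the total number of migrations by $O(n^{2})$, each costing $O(n)$---is in fact spelled out more carefully than in the paper, which leaves this cross-call accounting implicit.
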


			\section{Conclusions}\label{sec5}
			
			In this paper we investigated the bicriteria problem of scheduling jobs on a serial-batch machine to minimize makespan and maximum cost simultaneously. We improved the earlier results by presenting $O({{n}^{3}})$-time algorithms for the bounded model without precedence relation and the unbounded model with strict precedence relation. For future research, it is interesting to design an algorithm with better time complexity. It is also interesting to improve the algorithms presented in \citep{Baptiste00} for the problem of scheduling jobs with equal processing times and unequal release dates on a serial-batch machine.

			\backmatter

			\bmhead{Acknowledgments}
			
			This work is supported by Natural Science Foundation of Shandong Province China (No. ZR2020MA030).
			
			\section*{Declarations}
			The authors certify that they have no affiliations with or involvement in any organization or entity with any financial interest or non-financial interest in the subject matter or materials discussed in this manuscript.

		\end{document}